\newcommand{\Poly}{{\mathsf{P}}}
\newcommand{\NP}{{\mathsf{NP}}}
\newcommand{\NC}{{\mathsf{NC}}}
\newcommand{\LS}{{\mathsf{LOGSPACE}}}
\renewcommand{\O}{\mathcal{O}}
\DeclareMathOperator{\depth}{\mathsf{depth}}
\DeclareMathOperator{\size}{\mathsf{size}}
\newlength{\LowLen}
\DeclareMathOperator{\band}{\resizebox{!}{1.4\LowLen}{$\mathsf{AND}$}}
\DeclareMathOperator{\bor}{\resizebox{!}{1.4\LowLen}{$\mathsf{OR}$}}
\DeclareMathOperator{\bnot}{\resizebox{!}{1.4\LowLen}{$\mathsf{NOT}$}}
\DeclareMathOperator{\btrue}{\resizebox{!}{1.4\LowLen}{$\mathsf{TRUE}$}}
\DeclareMathOperator{\bfalse}{\resizebox{!}{1.4\LowLen}{$\mathsf{FALSE}$}}
\DeclareMathOperator{\copyg}{\resizebox{!}{1.4\LowLen}{$\mathsf{COPY}$}}
\DeclareMathOperator{\dupl}{\resizebox{!}{1.4\LowLen}{$\mathsf{DUPLICATION}$}}
\DeclareMathOperator{\smallband}{\resizebox{!}{.8\LowLen}{$\mathsf{AND}$}}
\DeclareMathOperator{\smallbnot}{\resizebox{!}{.8\LowLen}{$\mathsf{NOT}$}}
\DeclareMathOperator{\smallcopy}{\resizebox{!}{.8\LowLen}{$\mathsf{COPY}$}}
\DeclareMathOperator{\smalldupl}{\resizebox{!}{.8\LowLen}{$\mathsf{DUPLICATION}$}}
\newcommand{\LCP}{\mathcal{LCP}}
\newcommand{\LEP}{\mathcal{LEP}}
\newcommand{\CVP}{\mathcal{CVP}}
\newcommand{\decisionpb}[4]{\bigskip\noindent\fbox{\parbox{\textwidth}{
{#1} ({#2})\\
{\bf Input:} #3\\
{\bf Question:} #4
}}\bigskip}
\title{Is Graph Local Complementation Inherently Sequential?} 
\author{Pablo {Concha-Vega}}{Aix-Marseille Universit{\'e} Toulon, LIS, CNRS UMR 7020, Marseille, France \and \url{https://pageperso.lis-lab.fr/~pablo.concha-vega/} }{pablo.concha-vega@lis-lab.fr}{https://orcid.org/0009-0001-2419-1687}{}
\authorrunning{P. Concha-Vega} 
\keywords{Local complementation, P-completeness, vertex-minors, graph transformations} 
\begin{document}

\maketitle

\begin{abstract}
  Local complementation of a graph $G$ on vertex $v$ is an operation that results
  in a new graph $G*v$, where the neighborhood of $v$ is complemented.
  Two graph are locally equivalent if on can be reached from the other one
  through local complementation.

  It was previously established that recognizing locally equivalent graphs
  can be done in $\O(n^4)$ time. We sharpen this result by proving it can
  be decided in $\O(\log^2(n))$ parallel time with $n^{\O(1)}$ processors.

  As a second contribution, we introduce the Local Complementation Problem,
  a decision problem that captures the complexity of applying a sequence of local
  complementations. Given a graph $G$, a sequence of vertices
  $s$, and a pair of vertices $u,v$, the problem asks whether the
  edge $(u,v)$ is present in the graph obtained after applying local complementations
  according to $s$. Regardless it simplicity, it is proven to be $\Poly$-complete,
  therefore it is unlikely to be efficiently parallelizable. 


  Finally, it is conjectured that Local Complementation Problem remains
  $\Poly$-complete when restricted to circle graphs.
\end{abstract}

\section{Introduction}

Local complementation of an undirected graph $G$ at a vertex $v$ is a local
operation that produces a new graph $G * v$, where the
induced graph on the neighborhood of $v$ is replaced by its complement and
the adjacency relations with the rest of the graph remain unchanged (see
Figure~\ref{fig:lc_ex}). This operation has been widely studied in various
contexts, including graph theory, coding theory, and quantum
computing~\cite{ehrenfeucht04,bouchet94,bouchet87,bouchet91,danielsen08,joo11,hahn19,dahlberg20}.

\begin{figure}
  \centering
  \begin{tabular}{ccc}
  \includegraphics[scale=1]{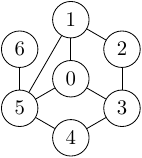}
  & &
  \includegraphics[scale=1]{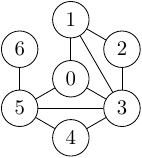}\\
  $G$ & & $G*0$
  \end{tabular}
  \caption{
    Example of local complementation. The operation is applied to vertex $0$,
    so the subgraph induced by its neighbors (vertices $1$, $3$, and $5$)
    is complemented in the resulting graph.
  }
  \label{fig:lc_ex}
\end{figure}

Two graphs are \textit{locally equivalent} if one can be obtained from the other
through a sequence of local complementations. Determining whether
two graphs are locally equivalent can be done in $\O(n^4)$ time~\cite{bouchet91}.
The first contribution of this work is to improve this bound by showing
that the problem belongs to $\NC^2$.

We then introduce the \emph{Local Complementation Problem} ($\LCP$),
a decision problem that captures the complexity of computing the effect of
successive local complementations. Given an undirected graph $G$, a sequence of
vertices $s$, and two vertices $u,v$, the problem asks whether the edge $(u,v)$ is
present in $G * s$, where $G * s$ denotes the graph obtained after applying local
complementation sequentially according to $s$.

The $\LCP$ can be seen as a minimal form of transformation problems
under local complementation: the sequence is fixed (i.e., there is
no need to choose which vertices to operate on), and the goal is not to reach
a target graph (possibly up to isomorphism), but simply to decide the presence or absence
of a single edge. Despite its simplicity, we prove its $\Poly$-completeness,
even though deciding local equivalence lies in $\NC$.

These results present a dichotomy: tracking the effect of a fixed sequence of local
complementations on a single edge is hard to parallelize, while computing the
global equivalence between two graphs is easy to compute. This highlights that
the computational nature of local complementation depends on whether one studies
its local action or its global effect.

\subsection{Previous work}

A graph $H$ is a \emph{vertex-minor}
of a graph $G$ if it is an induced subgraph of a graph locally equivalent to $G$.
Naturally, the operation of local complementation has been extensively studied
within the framework of vertex-minors theory. A comprehensive compendium of results
on this topic can be found in~\cite{kim24}.

A fundamental result on vertex-minors theory is the characterization of circle
graphs via forbidden vertex-minors~\cite{bouchet94}. A circle graph is defined by
a chord diagram where vertices correspond to chords and two vertices are adjacent
if and only if their corresponding chords intersect (see Figure~\ref{fig:circle_ex}
for an example). The theorem states that a graph is a circle graph if and only if
none of its vertex-minors is isomorphic to any of the graphs shown in Figure~\ref{fig:circle_forbidden}.
Furthermore, circle graphs are recognizable in polynomial time~\cite{bouchet87,gabor89}.

As mentioned in the introduction, it is known that recognizing a pair of
locally equivalent graphs can be done in $\O(n^4)$ time~\cite{bouchet91}.
Interestingly, deciding whether a graph is a vertex-minor of another graph or
isomorphic to a vertex-minor of another graph is an $\NP$-complete
problem~\cite{dahlberg22}. Previously, the same authors demonstrated that counting the
number of graphs locally equivalent to a given graph is $\#\Poly$-complete, even
when restricted to circle graphs~\cite{dahlberg20}.

Beyond classical graph theory, local complementation also plays a significant role
in quantum computing, particularly in the study of \emph{graph states}. A graph
state is a type of multi-qubit state that can be represented with a graph. In this
setting, local Clifford operations on graph states can be fully described using
local complementation~\cite{van04}. As a result, local complementation has been
widely studied in quantum computing as it represents a powerful tool for quantum
computing theorists. Notable works on this topic include~\cite{adcock20,hahn19,van04e}.

\begin{figure}
  \centering
  \raisebox{-0.5\height}{\includegraphics{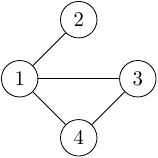}}
  \hspace*{.2in}
  \quad
  \raisebox{-0.5\height}{\includegraphics{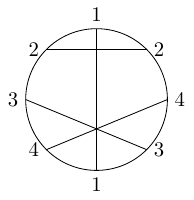}}
  \caption{Example of a circle graph and its chord diagram.}
  \label{fig:circle_ex}
\end{figure}

\begin{figure}
  \centering
  \includegraphics[scale=1]{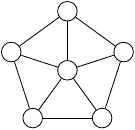}
  \quad
  \includegraphics[scale=1]{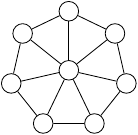}
  \quad
  \includegraphics[scale=1]{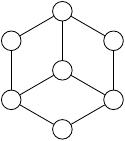}
  \caption{Forbidden vertex-minor graphs for the class of circle graphs.}
  \label{fig:circle_forbidden}
\end{figure}

\section{Preliminaries}

In this article, all graphs are simple.
Let $H$ be a graph. The set of vertices of $H$ is denoted $V(H)$, and the set
of edges by $E(H) \subseteq V(H) \times V(H)$ ($\subseteq {V(H) \choose 2}$
for undirected graphs).

let $G$ be an undirected graph. The \textit{local complementation} of $G$ at a
vertex $v$ is an operation resulting in a graph $G * v$ obtained by complementing the
adjacencies of all the neighbors of vertex $v$. In other words, two vertices
$u,u'\in V(G)$ are adjacent in $G * v$ if and only if exactly one of the following holds:

\begin{itemize}
  \item $(u,u') \in E(G)$
  \item $(u,v), (u',v) \in E(G)$
\end{itemize}

Applying local complementation to a graph $G$ over a sequence of vertices
$s = (v_1,\dots,v_k)$ is inductively defined as
$G*s = G*(v_1, \dots ,v_k) = G*v_1*(v_2, \dots ,v_k)$ and $G * \epsilon = G$.

\subsection{Boolean Circuits and $\NC$}

A \textit{Boolean circuit} $C$ is a labeled directed acyclic graph in which
edges carry unidirectional logical signals and the vertices compute elementary logical
functions. The entire graph computes a Boolean function from the inputs to the outputs
in a natural way. Every vertex $v \in V(C)$ has a type from $\{I, \band, \bor,
\bnot\}$. A vertex $v$ with type $I$ has in-degree $0$ and is called an \textit{input}.
The inputs of $C$ are given by a tuple $(x_1, \dots, x_n)$ of different vertices.
When computing a circuit, every input vertex has to be assigned a Boolean
value, these values are denoted by $(b_1, \dots, b_n) \in \{\bfalse, \btrue\}^n$.
A vertex $v$ with out-degree $0$ is called an \textit{output}.
The outputs of $C$ are given by a tuple $(y_1, \dots, y_m)$ of different vertices.
A vertex that is neither input nor output is called a \textit{gate}.
The size of $C$, denoted by $\size(C)$, is the number of vertices in $C$
The depth of $C$, denoted by $\depth(C)$, is the length of the longest path in $C$
from an input to an output.

In this work, adjacency matrices are regarded as the standard encoding of graphs,
for both directed and undirected. The \textit{adjacency matrix} of a graph $G$
is given by $A(G) = (a_{ij})$ where:

$$ a_{ij} =
  \begin{cases}
    1, & \text{if } (v_i,v_j) \in E(G)\\
    0, & \text{otherwise}  
  \end{cases}
$$

Given a circuit $C$, its encoding consists of two structures: the adjacency matrix
$A(C)$ of the underlying graph, and A sequence $t(C) \in \{I, \band, \bor,
\bnot\}^{\size(C)}$ storing the type of each vertex. The encoding of $C$ is
denoted by $\overline{C} = (A(C), t(C))$.

The class $\NC$ is defined in terms of Boolean circuits.
A problem is in $\NC$ if it can be solved by a family of Boolean circuits of
polylogarithmic depth and polynomial size. Equivalently, $\NC$ is
the class of problems solvable in polylogarithmic parallel time with a
polynomial number of processors by a PRAM (see \cite{greenlaw95} for a detailed definition).
For each $i \ge 1$, the subclass $\NC^i \subseteq \NC$
contains the problems decidables in $\O(\log^i n)$ parallel time. Furthermore:

$$
  \NC^1 \subseteq \LS \subseteq \NC^2 \subseteq \dots \subseteq \NC^i \subseteq \dots \subseteq NC \subseteq \Poly
$$

\subsection{Problems of Interest}

\subsubsection{The Circuit Value Problem}

The Circuit Value Problem ($\CVP$) is a widely studied decision problem
in $\Poly$-completeness theory and is considered the $\Poly$-complete
analogue of the Satisfiability Problem in $\NP$-completeness theory.
The problem is defined as follows: Given an encoding $\overline{C}$ of
a Boolean circuit $C$, input values $b_1, \dots, b_n$ and a
designated output $y$, determine the value of $y$ on the given input.
Formally:

\decisionpb
{Circuit Value Problem}
{$\CVP$}
{An encoding $\overline{C}$ of a circuit $C$, input values
$(b_1, \dots, b_n)$, a designated output $y$.}
{Is output $y$ of $C$ $\btrue$ on input values $(b_1,\dots b_n)$?}

Analogously to Satisfiability, $\CVP$ has many variants that are
$\Poly$-complete. One of them is the Topologically Ordered $\CVP$.
A \textit{topological ordering} of a circuit is a numbering of its
vertices such that $u$ is less than $v$ for every (directed) edge $(u,v)$.
Similarly, restraining the type of gates only to $\{\band, \bnot\}$ also
leads to a $\Poly$-complete problem. Interestingly, combining both
restrictions, also leads to a $\Poly$-complete version of $\CVP$.
Moreover, an even more restricted version considering only circuits of
fan-out at most $2$, is also $\Poly$-complete~\cite{greenlaw95}.
Although these restricted versions of the $\CVP$ have their own names
in the literature, this latter restricted $\CVP$ version will henceforth
be referred to simply as $\CVP$.

\subsubsection{The Local Complementation Problem}

The Local Complementation Problem ($\LCP$) is defined as follows: given
a graph $G$, a sequence of vertices $(v_1, \dots, v_k)$, and a pair
of vertices $u,v \in V(G)$, decide whether the edge $(u,v)$ belongs
to the graph obtained after applying the local complementation operation
on $G$ at each vertex in the given sequence in order. Formally:

\decisionpb
{Local Complementation Problem}
{$\LCP$}
{An undirected graph $G$, a sequence of vertices $(v_1, \dots, v_k)$ of $G$,
 a pair of vertices $u,v \in V(G)$.}
{Does $(u,v)$ belong to the graph $G * (v_1,\dots,v_k)$?} 

Note that the $\LCP$ is clearly in $\Poly$, as it can be solved by
a naive algorithm that iteratively applies the sequence of local
complementation to $G$.

\subsubsection{The Local Equivalence Problem}

To keep the presentation uniform, we introduce the following decision problem
concerning local equivalence:

\decisionpb
{Local Equivalence Problem}
{$\LEP$}
{Two undirected graphs $G_1$ and $G_2$ over the same vertex set $V=\{1,\dots,n\}$.}
{Is $G_1$ locally equivalent $G_2$?}

\section{$\LEP$ is in $\NC^2$}

In this section we sharpen the result of Bouchet~\cite{bouchet91} by proving
that deciding whether two graphs are locally equivalent is an $\NC^2$ problem.
First, a short summary of the relevant background is provided.

Let $G_1$ and $G_2$ be two undirected graphs over the same vertex-set
$V=\{1,\dots,n\}$. For $v,w,i \in V$, consider the following variables
over $\mathbb{F}_2$:

\begin{align*}
  \alpha_i^{vw} &= 1 \iff (i,v) \in E(G_1)\ \text{and}\ (i,w) \in E(G_2)\\
  \beta_i^{vw} &= 1 \iff (i,v) \in E(G_1)\ \text{and}\ i=w\\
  \gamma_i^{vw} &= 1 \iff i=v\ \text{and}\ (i,w) \in E(G_2)\\
  \delta_i^{vw} &= 1 \iff i=v=w
\end{align*}

\begin{theorem}[From~\texorpdfstring{\cite[Section~4]{bouchet91}}{}]
  Two graphs $G_1, G_2$ are locally equivalent if and only if the following
  systems of equations, with variables $X_i, Y_i, Z_i, T_i \in \mathbb{F}_2$,
  has a solution:

  \begin{flalign}
    \sum_{i=1}^n \left( \alpha_i^{vw}X_i + \beta_i^{vw}Y_i + \gamma_i^{vw}Z_i + \delta_i^{vw}T_i\right) = 0
    && \hfill \text{for all } v,w\in V && \label{eq:linear}\\
    X_i T_i + Y_i Z_i = 1 && \hfill \text{for all } i\in V && \label{eq:nonlinear}
  \end{flalign}
\end{theorem}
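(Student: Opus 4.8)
The plan is to recast local complementation as an invertible linear action over $\mathbb{F}_2$ and then read off the two systems of equations by a single coordinate computation. To a graph $G$ on $V=\{1,\dots,n\}$ I would associate the $n$-dimensional subspace $L(G)=\{(x,A(G)x):x\in\mathbb{F}_2^{n}\}\subseteq\mathbb{F}_2^{2n}$, i.e.\ the row space of $[\,I_n\mid A(G)\,]$; it is isotropic for the standard symplectic form because $A(G)$ is symmetric with zero diagonal, and $G\mapsto L(G)$ is injective. A direct computation --- this is the standard ``graph state'' dictionary --- shows that $L(G*v)$ is obtained from $L(G)$ by a \emph{local} transformation $\bigoplus_{i=1}^{n}M_i$ with each $M_i\in SL(2,\mathbb{F}_2)$, where $M_i$ is the identity unless $i=v$ or $i$ is a neighbour of $v$ (on those vertices $M_i$ is one of the two standard generators of $SL(2,\mathbb{F}_2)$). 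Since the maps $\bigoplus_i M_i$ with $M_i\in SL(2,\mathbb{F}_2)$ form a group, composing along a sequence $s$ yields: if $G_2=G_1*s$ then $L(G_2)=\bigl(\bigoplus_i M_i\bigr)L(G_1)$ for suitable $M_i\in SL(2,\mathbb{F}_2)$. This gives the ``only if'' direction once the coordinate translation below is in place.

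The hard part is the converse: if $L(G_2)=\bigl(\bigoplus_i M_i\bigr)L(G_1)$ for some $M_i\in SL(2,\mathbb{F}_2)$, then $G_1$ and $G_2$ are locally equivalent. First I would reduce to single-vertex moves by writing each $M_i$ as a word in the two generators of $SL(2,\mathbb{F}_2)$, so that $\bigoplus_i M_i$ becomes a product of transformations each altering a single coordinate pair $(p_j,q_j)$. The obstacle is that such a single-vertex move need not carry a subspace of the form $L(H)$ to another one of that form: swapping $p_j$ and $q_j$ is the pivot move and is legitimate only when $j$ has a suitable incident edge, so the intermediate subspaces in the product may fall outside the ``graphic'' world. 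The core of the proof is to show that the product can always be reorganised so as to pass only through graphic subspaces --- whenever the next single-vertex move would be illegitimate, one first inserts a genuine local complementation (equivalently, a pivot along an edge) that repositions things, lengthening the sequence but not changing the endpoint $L(G_2)$. Making this fix-up/pivoting argument precise, by induction on $n$ (or on the length of the generator word), is where essentially all the difficulty of the statement lies; it is the content of Bouchet's analysis in~\cite[Section~4]{bouchet91} (equivalently, of the graph-state description of local Clifford equivalence in~\cite{van04}).

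Granting the equivalence ``$G_1$ and $G_2$ are locally equivalent $\iff$ there are $M_i\in SL(2,\mathbb{F}_2)$ with $\bigl(\bigoplus_i M_i\bigr)L(G_1)=L(G_2)$'', the theorem follows by a mechanical translation. Write $X_i,Y_i,Z_i,T_i$ for the four entries of $M_i$ (suitably arranged) and put $D_X=\mathrm{diag}(X_1,\dots,X_n)$, and likewise $D_Y,D_Z,D_T$. Computing the image of the row space of $[\,I_n\mid A(G_1)\,]$ under the block form of $\bigoplus_i M_i$, the condition that it coincide with the row space of $[\,I_n\mid A(G_2)\,]$ becomes the single matrix identity
\[
  A(G_1)\,D_X\,A(G_2)\;+\;A(G_1)\,D_Y\;+\;D_Z\,A(G_2)\;+\;D_T\;=\;0
\]
over $\mathbb{F}_2$; its $(v,w)$ entry, expanded using the definitions of $\alpha_i^{vw},\beta_i^{vw},\gamma_i^{vw},\delta_i^{vw}$, is exactly the left-hand side of~\eqref{eq:linear}. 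The requirement $M_i\in SL(2,\mathbb{F}_2)$, that is $X_iT_i-Y_iZ_i=X_iT_i+Y_iZ_i=1$, is exactly~\eqref{eq:nonlinear}. The one point needing care here is bookkeeping --- fixing, once and for all, which entry of $M_i$ plays the role of which of $X_i,Y_i,Z_i,T_i$ so that all four coefficient patterns line up; given this, invertibility of $\bigoplus_i M_i$ (immediate from~\eqref{eq:nonlinear}) automatically upgrades the matrix identity from a containment of row spaces to an equality, so no separate rank check is needed.
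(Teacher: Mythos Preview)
The paper does not prove this theorem: it is quoted verbatim from Bouchet~\cite[Section~4]{bouchet91} and used as a black box in the $\NC^2$ argument, so there is no in-paper proof to compare against.

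That said, your sketch is the standard route and is essentially Bouchet's own argument recast in the Lagrangian/graph-state language of~\cite{van04}. The translation step is correct: with $D_X,D_Y,D_Z,D_T$ diagonal, the $(v,w)$ entry of $A(G_1)D_XA(G_2)+A(G_1)D_Y+D_ZA(G_2)+D_T$ is exactly the left-hand side of~\eqref{eq:linear}, and $\det M_i=1$ over $\mathbb{F}_2$ is~\eqref{eq:nonlinear}. Two small points of hygiene. First, your description ``$M_i$ is the identity unless $i=v$ or $i$ is a neighbour of $v$'' makes the single-step transformation depend on the \emph{current} graph (through $N(v)$); this is fine for the ``only if'' direction since you only need that the composite along $s$ is of the form $\bigoplus_i M_i$, but it is worth saying explicitly. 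Second, you call $L(G)$ both ``the row space of $[\,I_n\mid A(G)\,]$'' and ``$\{(x,A(G)x)\}$''; pick one convention and stick with it so the block computation is unambiguous. You are right that the genuine content is the converse---that any $\bigoplus_i M_i$ carrying $L(G_1)$ to $L(G_2)$ can be realised by a sequence of local complementations---and that this is exactly what Bouchet establishes; your outline of the fix-up/pivoting strategy is accurate but, as you acknowledge, is a pointer to~\cite{bouchet91,van04} rather than a proof.
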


The results from~\cite[Theorems~1 and~5]{borodin82} and~\cite{mulmuley86} show that
there exists an $\NC^2$ algorithm for finding a basis for the nullspace
of a given matrix with coefficients over an arbitrary field.
Therefore, we can directly apply this result to
the linear system~(\ref{eq:linear}).
The solutions to~(\ref{eq:linear}) are a subspace $\mathcal{S}$ of $\mathbb{F}_2^{4n}$.
Let us denote a basis of $\mathcal{S}$ by $B$.
Note that $\dim(B) = \O(n)$ since there are $4n$ variables. 
A naive algorithm for solving the full system composed of~(\ref{eq:linear})
and~(\ref{eq:nonlinear}) would consist of evaluating
all the linear combinations of $B$ over~(\ref{eq:nonlinear}), however
the number of combinations is potentially exponential.
Fortunately, Bouchet also proved the following two results.

\begin{lemma}[From~\texorpdfstring{\cite[Lemma~4.3]{bouchet91}}{}]
  If the system of equations consisting of~(\ref{eq:linear}) and~(\ref{eq:nonlinear})
  has any solution and $\dim(\mathcal{S}) > 4$, then there exists an affine subspace
  $\mathcal{A}$ of $\mathcal{S}$ such that $\dim(\mathcal{S}) - \dim(\mathcal{A}) \leq 2$
  and every $a \in \mathcal{A}$ is a solution to~(\ref{eq:linear}) and~(\ref{eq:nonlinear}). 
\end{lemma}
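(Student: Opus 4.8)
The plan is to anchor the analysis at a known solution, linearise the quadratic part around it, and then reduce to a dimension count over $\mathbb{F}_2$.

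\emph{Normalisation around a solution.} By hypothesis the system~(\ref{eq:linear})--(\ref{eq:nonlinear}) has a solution $a^{*}=(X^{*},Y^{*},Z^{*},T^{*})\in\mathcal{S}$. Since $\mathcal{S}$ is a linear subspace containing $a^{*}$, every element of $\mathcal{S}$ is of the form $a^{*}+t$ with $t\in\mathcal{S}$, and $a^{*}+t$ automatically satisfies~(\ref{eq:linear}). Expanding~(\ref{eq:nonlinear}) over $\mathbb{F}_2$ yields, for each $i\in V$,
\[
 (X_i^{*}+X_i)(T_i^{*}+T_i)+(Y_i^{*}+Y_i)(Z_i^{*}+Z_i)=1+\ell_i(t)+q_i(t),
\]
with $\ell_i(t)=X_i^{*}T_i+X_iT_i^{*}+Y_i^{*}Z_i+Y_iZ_i^{*}$ linear and $q_i(t)=X_iT_i+Y_iZ_i$. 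Hence $a^{*}+t$ solves the full system iff $q_i(t)=\ell_i(t)$ for all $i$, a system of $n$ conditions in which only the $q_i$ are non-linear and each $q_i$ depends only on the four-coordinate block $(X_i,Y_i,Z_i,T_i)$ of $t$. (Equivalently, since~(\ref{eq:nonlinear}) asks each block $\left(\begin{smallmatrix}X_i&Y_i\\ Z_i&T_i\end{smallmatrix}\right)$ to lie in $GL_2(\mathbb{F}_2)$, the shifted condition asks it to stay in $GL_2(\mathbb{F}_2)$ under the perturbation $t$.) It is thus enough to produce a linear subspace $\mathcal{T}\subseteq\mathcal{S}$ with $\dim\mathcal{S}-\dim\mathcal{T}\le 2$ on which all these conditions hold, and to set $\mathcal{A}:=a^{*}+\mathcal{T}$.

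\emph{Separating the genuinely quadratic directions.} The polarisation $b_i(t,t')=q_i(t+t')+q_i(t)+q_i(t')$ is a symplectic form supported on block $i$, and $q_i$ is linear on any subspace whose block-$i$ projection is totally isotropic for $b_i$. Let $\mathcal{R}=\{t\in\mathcal{S}:b_i(t,s)=0 \text{ for all } i\in V,\ s\in\mathcal{S}\}$ be the common radical. On $\mathcal{R}$ every $q_i$ is additive, hence linear, so $q_i=\ell_i$ restricted to $\mathcal{R}$ is a linear system; it is solvable (by $t=0$), so its solution set is a subspace $\mathcal{T}\subseteq\mathcal{R}$. It then remains to prove the double estimate $(\dim\mathcal{S}-\dim\mathcal{R})+(\dim\mathcal{R}-\dim\mathcal{T})\le 2$. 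For the first summand one combines the fact that the $b_i$ have pairwise-disjoint supports with the rigidity of~(\ref{eq:linear}): rewriting that system (a direct computation from the definitions of $\alpha,\beta,\gamma,\delta$) as $A_1XA_2+A_1Y+ZA_2+T=0$ with $A_1,A_2$ the adjacency matrices of $G_1,G_2$ and $X,Y,Z,T$ diagonal, one reads off that the $T$-coordinates of any element of $\mathcal{S}$ are a fixed linear function of its $X$-coordinates, so that a large value of $\dim\mathcal{S}$ leaves room for only boundedly many non-isotropic blocks. For the second summand one shows the functionals $q_i-\ell_i$ on $\mathcal{R}$ are highly redundant. The hypothesis $\dim\mathcal{S}>4$ is precisely what forces the total to be at most $2$.

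\emph{Main obstacle.} The normalisation step is routine; the real work, and the step I expect to be delicate, is the quantitative bookkeeping above. A priori $n$ independent quadratic constraints could cut the dimension by $\Theta(n)$, so one must use two facts at once: that the set of invertible $2\times 2$ matrices over $\mathbb{F}_2$ contains no affine flat of dimension $2$ (only of dimension $1$), and that the inter-block coupling imposed by~(\ref{eq:linear}) is stringent enough that, once $\dim\mathcal{S}>4$, all but an $\O(1)$-dimensional part of $\mathcal{S}$ is inert for the non-linear part. Bringing the resulting absolute constant down to exactly $2$ --- rather than some larger constant --- is where the estimate has to be tight, and is the part most likely to need the full strength of Bouchet's structural description of $\mathcal{S}$.
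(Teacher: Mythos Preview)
The paper does not prove this lemma; it is quoted from Bouchet's 1991 article (the citation \cite[Lemma~4.3]{bouchet91} in the statement header) and used as a black box in the $\NC^2$ argument for $\LEP$. There is therefore no proof in the present paper to compare your attempt against.

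As for the proposal itself, it is a plan rather than a proof, and you acknowledge as much. The normalisation around a fixed solution and the polarisation of the block-quadratics $q_i$ are correct and standard. The substance, however---bounding $\dim\mathcal{S}-\dim\mathcal{T}$ by exactly $2$---is not carried out. Your sketch of the first summand also contains an inaccuracy: from $A_1 X A_2 + A_1 Y + Z A_2 + T = 0$ the diagonal $T$ is a linear function of $(X,Y,Z)$ jointly, not of $X$ alone, so the assertion that ``a large value of $\dim\mathcal{S}$ leaves room for only boundedly many non-isotropic blocks'' does not follow from what you wrote. Absent a concrete argument here, nothing prevents the $n$ block-conditions from cutting $\Theta(n)$ dimensions. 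Obtaining the constant $2$ (rather than some unspecified $O(1)$, let alone $O(n)$) is exactly the point where Bouchet's structural analysis of $\mathcal{S}$ in the surrounding sections of \cite{bouchet91} is needed, and your outline invokes that analysis without reproducing it.
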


\begin{lemma}[From~\texorpdfstring{\cite[Lemma~4.4]{bouchet91}}{}]
  For every basis $B$ of a vector space $\mathcal{S}$ over $\mathbb{F}_2$ and
  every affine subspace $\mathcal{A}$ of $\mathcal{S}$ such that
  $\dim(\mathcal{S}) - \dim(\mathcal{A}) \leq q$, there exists a vector
  $a \in \mathcal{A}$ which is the sum of at most $q$ vectors of $B$.
\end{lemma}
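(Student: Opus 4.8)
The plan is to reinterpret the statement as a low-Hamming-weight feasibility question for a linear system over $\mathbb{F}_2$, and to settle it by a rank argument. Write $d = \dim(\mathcal{S})$, let $B = \{b_1,\dots,b_d\}$, and set $r = \dim(\mathcal{S}) - \dim(\mathcal{A}) \le q$. Since we work over $\mathbb{F}_2$, every vector of $\mathcal{S}$ equals $\sum_{i\in I} b_i$ for a unique set $I \subseteq \{1,\dots,d\}$ (its support in the basis $B$), and it is the sum of at most $q$ vectors of $B$ exactly when $|I| \le q$. So the goal becomes: produce $I$ with $|I| \le q$ such that $\sum_{i \in I} b_i \in \mathcal{A}$.

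The first step is to encode membership in $\mathcal{A}$ by equations. Being a nonempty affine subspace of $\mathcal{S}$ of codimension $r$, $\mathcal{A}$ is the preimage $\psi^{-1}(c)$ of a point $c \in \mathbb{F}_2^{r}$ under some surjective linear map $\psi\colon \mathcal{S} \to \mathbb{F}_2^{r}$ (take $\psi$ with kernel the direction space of $\mathcal{A}$ and $c = \psi(v_0)$ for any $v_0 \in \mathcal{A}$). Writing $\psi$ in the basis $B$ yields an $r \times d$ matrix $M$ over $\mathbb{F}_2$ with $\operatorname{rank}(M) = r$, so that, for a coordinate vector $x \in \mathbb{F}_2^d$, one has $\sum_i x_i b_i \in \mathcal{A}$ if and only if $Mx = c$.

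The second step is the rank argument, which is the heart of the proof. Since $M$ has $r$ rows and rank $r$, its columns span $\mathbb{F}_2^{r}$, hence some $r$-element column set $J$ gives an invertible submatrix $M_J$. Let $x_J = M_J^{-1} c$ and extend $x$ by zeros outside $J$; then $Mx = M_J x_J = c$ and the Hamming weight of $x$ is at most $|J| = r \le q$. Taking $I = \{i : x_i = 1\}$, the vector $a = \sum_{i\in I} b_i$ lies in $\mathcal{A}$ and is a sum of at most $q$ vectors of $B$. The boundary case $r = 0$ (where $\mathcal{A} = \mathcal{S}$) is handled by $a = 0$, the empty sum.

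I expect the only genuinely delicate point to be justifying the encoding step cleanly: that a codimension-$r$ affine subspace is exactly the solution set of $r$ linearly independent affine $\mathbb{F}_2$-equations, and that expressing those equations in the basis $B$ gives a full-row-rank matrix. After that the rest is the elementary observation that a matrix of full row rank contains an invertible square submatrix using as many columns as there are rows, from which the sparse solution is read off directly; the argument is in fact constructive, which is what makes it fit into the $\NC^2$ procedure for $\LEP$. (An alternative route is induction on $q$, peeling off one defining hyperplane of $\mathcal{A}$ at a time, but the rank argument seems more transparent.)
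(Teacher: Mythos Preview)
The paper does not give its own proof of this lemma; it is quoted from Bouchet's paper and used as a black box in the $\NC^2$ argument for $\LEP$. Your argument is correct: encoding the codimension-$r$ affine subspace $\mathcal{A}$ as the solution set $Mx=c$ of a full-row-rank $r\times d$ system over $\mathbb{F}_2$, then selecting an invertible $r\times r$ column submatrix, yields a solution of Hamming weight at most $r\le q$, which is precisely a sum of at most $q$ basis vectors lying in $\mathcal{A}$; the edge case $r=0$ is handled by the empty sum.
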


Consequently, if the full system has a solution, there always exists
one that is a linear combination of at most two vectors of $B$
(if $\dim(\mathcal{S}) \leq 4$ one can simply check the $\leq 16$ possible candidates).
Determining whether a vector $b_k \in B$ satisfies~(\ref{eq:nonlinear})
can be decided in $\NC$. Let assume that $b_k$ is shaped as

$$
  b_k = (X_1^k, Y_1^k, Z_1^k, T_1^k, \dots X_n^k, Y_n^k, Z_n^k, T_n^k)
$$

Then, using $n$ processors one can verify~(\ref{eq:nonlinear}) for $b_k$,
as described in Algorithm~\ref{alg:check_vector}.
Each verification produces a Boolean value that is stored in an array.
This array is the input of a prefix-$\band$, which can be performed in
$\NC^2$~\cite{jaja92}, and outputs $1$ if and only if the vector is
a solution to~(\ref{eq:nonlinear}).
Since each local verification requires $\O(1)$ parallel time and the
prefix-$\band$ runs in $\O(\log n)$ time, the whole procedure decides
whether $b_k$ satisfies~(\ref{eq:nonlinear}) in $\O(\log n)$ parallel time.
Running this algorithm simultaneously for all $\O(n)$ vectors of $B$ requires
$\O(n^2)$ processors.

\SetKwFor{ParFor}{for}{in parallel do}{end}
\begin{algorithm}[t]
\SetAlgoLined
\KwIn{A vector $b_k$ of basis $B$}
\KwOut{$1$ if $b_k$ is a solution to~(\ref{eq:nonlinear}), $0$ otherwise}
$\textsf{valid = [0,\dots,0]}$ \tcp*[r]{array of size $n$}
\ParFor{$1 \leq j \leq n$}{
    $index = 4(j-1) + 1$\\
    $X, Y, Z, T = b[index], b[index+1], b[index+2], b[index+3]$\\
    \If{$XT + YZ = 1$}{
        $\textsf{valid[j] = 1}$
    }
}
\Return AND($\textsf{valid}$) \tcp*[r]{prefix-$\band$}
\caption{Checks if a vector satisfies~(\ref{eq:nonlinear}).}
\label{alg:check_vector}
\end{algorithm}

The next step is to check all pairwise combinations of the basis $B$.
Using a similar technique, each processor is assigned two corresponding
blocks of four coordinates (one from each of two vectors).
The processor computes their componentwise sum~(modulo $2$) and then verifies
whether the resulting block satisfies condition~(\ref{eq:nonlinear}).
Since there are $\O(n^2)$ possible combinations, the algorithm now requires
$\O(n^3)$ processors. As before, the outcomes are collected in an array of
Boolean values, and a prefix-$\band$ determines whether a given pair of
vectors yields a valid solution. 

Since computing $B$ is the computationally hardest part overall,
the following result follows:

\begin{theorem}
  $\LEP$ is in $\NC^2$.
\end{theorem}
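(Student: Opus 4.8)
The plan is to combine Bouchet's algebraic characterization of local equivalence (the theorem quoted above) with parallel linear algebra over $\mathbb{F}_2$. Given the two input graphs $G_1,G_2$ on vertex set $\{1,\dots,n\}$, I would first compute, in constant parallel time, all the coefficients $\alpha_i^{vw},\beta_i^{vw},\gamma_i^{vw},\delta_i^{vw}$ directly from the two adjacency matrices: each such coefficient is a conjunction of at most two matrix entries (together with equality tests on indices), so this step is in $\NC^1$ and uses $\O(n^3)$ processors. These coefficients assemble the $n^2 \times 4n$ matrix $M$ over $\mathbb{F}_2$ whose kernel is exactly the solution space $\mathcal{S}$ of the linear system~(\ref{eq:linear}).

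The core of the algorithm is to produce a basis $B$ of $\mathcal{S}$. For this I would invoke the $\NC^2$ algorithm for computing a basis of the nullspace of a matrix over an arbitrary field, which follows from~\cite{borodin82} together with~\cite{mulmuley86}. Since $M$ has $4n = \O(n)$ columns we have $|B| = \dim(\mathcal{S}) = \O(n)$, and each basis vector is a tuple $(X_1^k,Y_1^k,Z_1^k,T_1^k,\dots,X_n^k,Y_n^k,Z_n^k,T_n^k)$ of $4n$ coordinates. If $\dim(\mathcal{S}) \le 4$ then $\mathcal{S}$ has at most $16$ elements; I would enumerate all of them (sums of subsets of $B$) and test each against the quadratic equations~(\ref{eq:nonlinear}), all within $\NC^1$, accepting iff one of them passes.

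If $\dim(\mathcal{S}) > 4$, Bouchet's Lemma~4.3 and Lemma~4.4 (both quoted above) guarantee that whenever the full system~(\ref{eq:linear})--(\ref{eq:nonlinear}) is solvable, some solution is the sum of at most two vectors of $B$. Hence it suffices to test the $\O(n)$ single basis vectors and the $\O(n^2)$ pairwise componentwise sums $b_k \oplus b_\ell$. Testing one candidate amounts to checking, for every $i \in V$, that $X_iT_i + Y_iZ_i = 1$ in $\mathbb{F}_2$; these $n$ local checks run in $\O(1)$ parallel time (Algorithm~\ref{alg:check_vector}), and aggregating their outcomes with a prefix-$\band$ takes $\O(\log n)$ time in $\NC^2$~\cite{jaja92}. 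Running every candidate concurrently collects $\O(n^2)$ Boolean verdicts, one per candidate, and a final prefix-$\bor$ over this array (again $\NC^2$) decides acceptance; the processor count peaks at $\O(n^3)$. Correctness follows because, by the two lemmas, the full system has a solution — equivalently $G_1$ and $G_2$ are locally equivalent — if and only if at least one of the tested candidates satisfies~(\ref{eq:nonlinear}).

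The main obstacle is pinning down the single genuinely non-trivial ingredient, the nullspace-basis computation, inside $\NC^2$: a careless implementation via ordinary Gaussian elimination would not obviously meet the depth bound, so the argument rests essentially on the Borodin--von zur Gathen--Hopcroft and Mulmuley parallel rank/nullspace results holding over an arbitrary field (in particular over $\mathbb{F}_2$). Everything else — coefficient extraction, enumeration of the $\le 16$ small-dimension candidates, the quadratic checks, and the two prefix-operations — lies in $\NC^2$ or below, so the composed algorithm stays in $\NC^2$, and the theorem follows.
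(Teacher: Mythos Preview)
Your proposal is correct and follows essentially the same route as the paper: compute the coefficients of system~(\ref{eq:linear}), obtain a basis of its solution space in $\NC^2$ via the Borodin--von~zur~Gathen--Hopcroft and Mulmuley results, handle $\dim(\mathcal{S})\le 4$ by brute force, and otherwise test all singletons and pairwise sums of basis vectors against~(\ref{eq:nonlinear}) using the two Bouchet lemmas. You make a couple of steps more explicit than the paper (the $\NC^1$ extraction of the coefficients and the final prefix-$\bor$ aggregation), but the argument and its bottleneck are the same.
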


\section{$\Poly$-completeness of $\LCP$}

This section studies the $\Poly$-completeness of $\LCP$.
This is proven by reducing the $\CVP$ to $\LCP$.
This is achieved by constructing a series of
\emph{gadgets} to simulate circuits.
From now on, when reducing from the previously defined restricted version
of the $\CVP$, it is assumed that circuits have a topological ordering,
are composed only of $\band$ and $\bnot$, and have fan-out at most $2$.

\subsection{Circuit Gadgets}

Although the aim is to simulate circuits, the nature of the proof needs not
only $\band$ and $\bnot$ gadgets but also the inclusion of a $\copyg$ gadget
and a $\dupl$ gadget. Every gadget is composed by a graph, a sequence of
vertices, inputs and outputs. The main idea is that a gadget must compute
its task by orderly applying the local complementation operation over the
vertices of its sequence. It is important to note that, in order to prove the
main result, the sequence for each gadget must remain constant; that is,
it cannot depend on the input. Instead, the same
sequence must function correctly for any possible input to the gadget.

Inputs and outputs in the circuit are represented by the existence or absence
of predefined edges in the graph, therefore, for every input $x_i$ of a
circuit $C$, a pair of vertices $v_i, v'_i$ is created and they are connected
by an edge if and only if its Boolean value $b_i = \btrue$.

In the same fashion, $\copyg$ of a $\btrue$ variable
consists of ``turning on'' another predefined edge through a sequence of local
complementations. On the other hand, if the variable has a $\bfalse$ value,
the same sequence of local complementations will not turn on the designated
edge. In particular, this gadget is a quadruple
$(G_{\smallcopy}, s_{\smallcopy}, in, out)$ such that:

\begin{itemize}
  \item $V(G_{\smallcopy}) = \{0,1,2,3\}$,
  \item $E(G_{\smallcopy}) = \{(0,2), (1,3)\}$,
  \item $s_{\smallcopy} = (0,1,0)$,
  \item $in = (0,1)$ represents the input of the gadget,
  \item $out = (2,3)$ represents the output of the gadget.
\end{itemize}

The functioning of this gadget is depicted in Figure~\ref{fig:copy}.

\begin{figure}
  \centering
  \begin{subfigure}[b]{.2\textwidth}
    \centering
    \includegraphics[scale = .6]{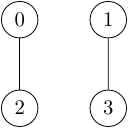}
    \subcaption{}
  \end{subfigure}%
  \begin{subfigure}[b]{.2\textwidth}
    \centering
    \includegraphics[scale = .6]{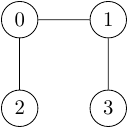}
    \subcaption{}
  \end{subfigure}%
  \begin{subfigure}[b]{.2\textwidth}
    \centering
    \includegraphics[scale = .6]{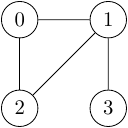}
    \subcaption{}
  \end{subfigure}%
  \begin{subfigure}[b]{.2\textwidth}
    \centering
    \includegraphics[scale = .6]{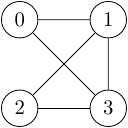}
    \subcaption{}
  \end{subfigure}%
  \begin{subfigure}[b]{.2\textwidth}
    \centering
    \includegraphics[scale = .6]{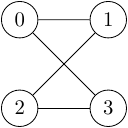}
    \subcaption{}
  \end{subfigure}
  \caption{
    Functioning of the $\copyg$ gadget. (a) Gadget initialized with a $\bfalse$
    input, i.e., the edge $(0,1)$ is not part of the graph.
    After applying the local complementation on the vertices given by
    $s_{\smallcopy}$ the graph remains invariant
    ($G_{\smallcopy} = G_{\smallcopy} * s_{\smallcopy}$).
    (b) Gadget initialized as $\btrue$. After applying the local
    complementations on the vertices $s_{\smallcopy}$ one obtains the graphs
    (c), (d) and (e). Note that the resulting graph has the edge $(2,3)$,
    which is the predefined output edge.
    }
  \label{fig:copy}
\end{figure}

The $\bnot$ gadget is now introduced. In order to negate a variable represented
by two vertices $v, v'$, it is enough to add a new vertex $u$ together with the
edges $(u,v)$ and $(u,v')$. Then the local complementation on vertex $u$ toggles
the adjacency relation of $v$ and $v'$. A few vertices and edges are added for
clarity. Therefore, the $\bnot$ gadget is a quadruple
$(G_{\smallbnot}, s_{\smallbnot}, in, out)$ such that:

\begin{itemize}
  \item $V(G_{\smallbnot}) = \{0,1,2,3,4\}$,
  \item $E(G_{\smallbnot}) = \{(0,2), (1,3), (2,4), (3,4)\}$,
  \item $s_{\smallbnot} = (0,1,0,4)$,
  \item $in = (0,1)$ represents the input of the gadget,
  \item $out = (2,3)$ represents the output of the gadget.
\end{itemize}

Note that the $\bnot$ gadget is a $\copyg$ gadget with an extra vertex
that computes the negation. Its functioning is depicted in
Figures~\ref{fig:not_f} and~\ref{fig:not_t}.

\begin{figure}
  \centering
  \begin{subfigure}[b]{.48\textwidth}
    \centering
    \includegraphics[scale = .6]{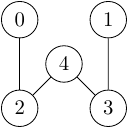}
    \subcaption{}
  \end{subfigure}%
  \begin{subfigure}[b]{.48\textwidth}
    \centering
    \includegraphics[scale = .6]{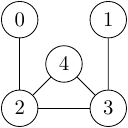}
    \subcaption{}
  \end{subfigure}%
  \caption{
    Functioning of the $\bnot$ gadget initialized with a $\bfalse$ input.
    (a) The gadget remains identical for the first three local complementations.
    (b) Graph obtained after the last operation.
  }
  \label{fig:not_f}
\end{figure}
\begin{figure}
  \centering
  \begin{subfigure}[b]{.2\textwidth}
    \centering
    \includegraphics[scale = .6]{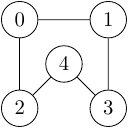}
    \subcaption{}
  \end{subfigure}%
  \begin{subfigure}[b]{.2\textwidth}
    \centering
    \includegraphics[scale = .6]{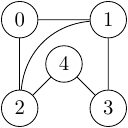}
    \subcaption{}
  \end{subfigure}%
  \begin{subfigure}[b]{.2\textwidth}
    \centering
    \includegraphics[scale = .6]{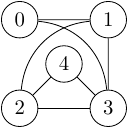}
    \subcaption{}
  \end{subfigure}%
  \begin{subfigure}[b]{.2\textwidth}
    \centering
    \includegraphics[scale = .6]{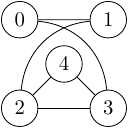}
    \subcaption{}
  \end{subfigure}%
  \begin{subfigure}[b]{.2\textwidth}
    \centering
    \includegraphics[scale = .6]{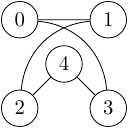}
    \subcaption{}
  \end{subfigure}%
  \caption{
    Functioning of the $\bnot$ gadget initialized with a $\btrue$ input.
  }
  \label{fig:not_t}
\end{figure}

The focus now shifts to the $\band$ gadget. The $\band$ gadget is the only
gadget with two input values, then it has to correctly compute the $\band$
function of all possible $4$ combinations of inputs. It consists of a
quintuple $(G_{\smallband}, s_{\smallband}, in, in', out)$ such that

\begin{itemize}
  \item $V(G_{\smallband}) = \{0,1,2,3,4,5,6\}$,
  \item $E(G_{\smallband}) = \{(0,4), (1,5), (2,6), (3,4)\}$,
  \item $s_{\smallband} = (1,2,0,3,4)$,
  \item $in = (0,1)$ represents the first input of the gadget,
  \item $in' = (2,3)$ represents the second input of the gadget,
  \item $out = (5,6)$ represents the output of the gadget.
\end{itemize}

The functioning of the $\band$ gadget is depicted in Figures~\ref{fig:and_f_t}
and \ref{fig:and_t_t}, for the cases in which $in = \btrue$, $in' = \bfalse$
and $in = in' = \btrue$, respectively. The case $in=in'=\bfalse$ is
straightforward as it only generates one new edge ($(0,3)$). The case
$in = \bfalse$, $in' = \btrue$ is symmetrical to $in = \btrue$, $in' = \bfalse$.

\begin{figure}
  \centering
  \begin{subfigure}[b]{.33\textwidth}
    \centering
    \includegraphics[scale = .5]{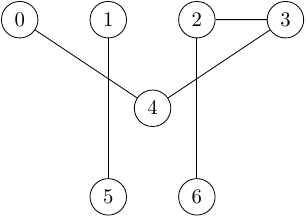}
    \subcaption{}
  \end{subfigure}%
  \begin{subfigure}[b]{.33\textwidth}
    \centering
    \includegraphics[scale = .5]{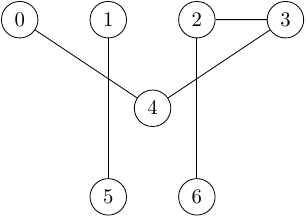}
    \subcaption{}
  \end{subfigure}%
  \begin{subfigure}[b]{.33\textwidth}
    \centering
    \includegraphics[scale = .5]{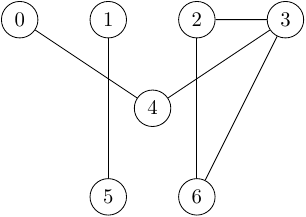}
    \subcaption{}
  \end{subfigure}\newline
  \begin{subfigure}[b]{.33\textwidth}
    \centering
    \includegraphics[scale = .5]{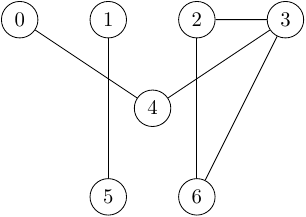}
    \subcaption{}
  \end{subfigure}%
  \begin{subfigure}[b]{.33\textwidth}
    \centering
    \includegraphics[scale = .5]{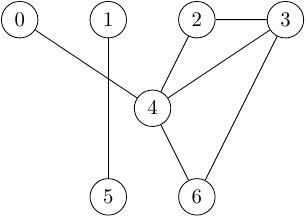}
    \subcaption{}
  \end{subfigure}%
  \begin{subfigure}[b]{.33\textwidth}
    \centering
    \includegraphics[scale = .5]{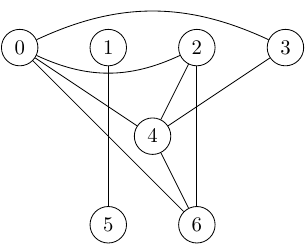}
    \subcaption{}
  \end{subfigure}%
  \caption{Functioning of the $\band$ gadget with $in = \bfalse$, $ in' = \btrue$.}
  \label{fig:and_f_t}
\end{figure}

\begin{figure}
  \centering
  \begin{subfigure}[b]{.33\textwidth}
    \centering
    \includegraphics[scale = .5]{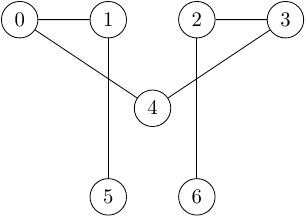}
    \subcaption{}
  \end{subfigure}%
  \begin{subfigure}[b]{.33\textwidth}
    \centering
    \includegraphics[scale = .5]{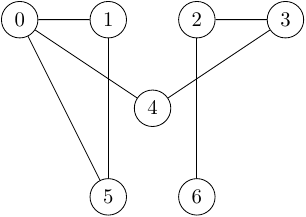}
    \subcaption{}
  \end{subfigure}%
  \begin{subfigure}[b]{.33\textwidth}
    \centering
    \includegraphics[scale = .5]{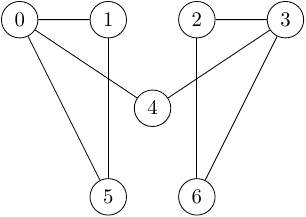}
    \subcaption{}
  \end{subfigure}\newline
  \begin{subfigure}[b]{.33\textwidth}
    \centering
    \includegraphics[scale = .5]{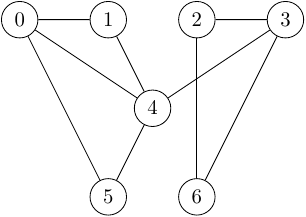}
    \subcaption{}
  \end{subfigure}%
  \begin{subfigure}[b]{.33\textwidth}
    \centering
    \includegraphics[scale = .5]{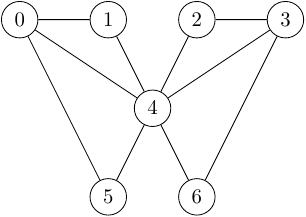}
    \subcaption{}
  \end{subfigure}%
  \begin{subfigure}[b]{.33\textwidth}
    \centering
    \includegraphics[scale = .5]{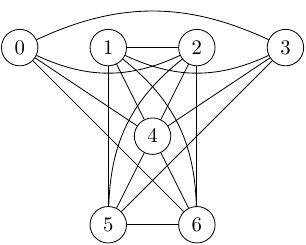}
    \subcaption{}
  \end{subfigure}%
  \caption{Functioning of the $\band$ gadget with $in = in' = \btrue$.}
  \label{fig:and_t_t}
\end{figure}

Finally, the $\dupl$ gadget, as its name suggests, duplicates a variable,
therefore, it has one input and two outputs. Formally, it is defined as
a quintuple $(G_{\smalldupl}, s_{\smalldupl}, in, out, out')$ such that:

\begin{itemize}
  \setlength\itemsep{0em}
  \item $V(G_{\smalldupl}) = \{0,1,2,3,4,5,6,7\}$,
  \item $E(G_{\smalldupl}) = \{(0,2), (0,4), (1,3), (1,5), (2,5), (2,6), (3,4), (3,7)\}$,
  \item $s_{\smalldupl} = (0,1,0,2,3,2,0,3,0)$,
  \item $in = (0,1)$ represents the input of the gadget,
  \item $out = (4,5)$ represents the first output of the gadget,
  \item $out' = (6,7)$ represents the second output of the gadget.
\end{itemize}

The functioning of the $\dupl$ gadget is depicted in
Figures~\ref{fig:duplication_f} and~\ref{fig:duplication_t}.
According to the previous definition, the last $0$ in $s_{\smalldupl}$
might seem redundant, since after $8$ steps the output variables are
already connected if and only if the corresponding Boolean value is $\btrue$.
However, this extra step is added to ensure an additional important
property: vertices of different outputs are not connected.
This is crucial because when gadgets are connected, local complementations
are applied on input vertices, and the presence of such edges would cause
conflicts and invalidate the construction.

\begin{figure}
  \centering
  \begin{subfigure}[b]{.33\textwidth}
    \centering
    \includegraphics[scale=.5]{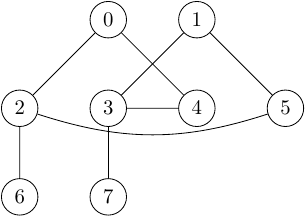}
    \subcaption{}
  \end{subfigure}%
  \begin{subfigure}[b]{.33\textwidth}
    \centering
    \includegraphics[scale=.5]{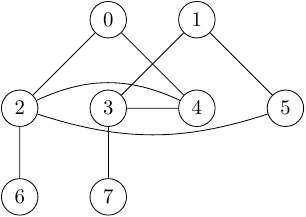}
    \subcaption{}
  \end{subfigure}%
  \begin{subfigure}[b]{.33\textwidth}
    \centering
    \includegraphics[scale=.5]{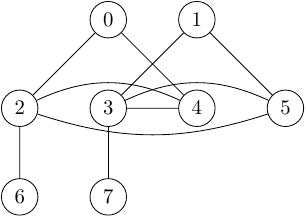}
    \subcaption{}
  \end{subfigure}\newline
  \begin{subfigure}[b]{.33\textwidth}
    \centering
    \includegraphics[scale=.5]{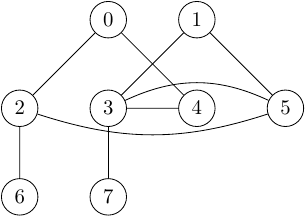}
    \subcaption{}
  \end{subfigure}%
  \begin{subfigure}[b]{.33\textwidth}
    \centering
    \includegraphics[scale=.5]{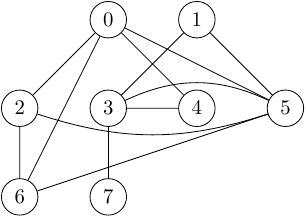}
    \subcaption{}
  \end{subfigure}%
  \begin{subfigure}[b]{.33\textwidth}
    \centering
    \includegraphics[scale=.5]{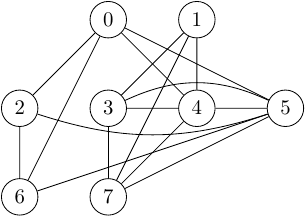}
    \subcaption{}
  \end{subfigure}\newline
  \begin{subfigure}[b]{.25\textwidth}
    \centering
    \includegraphics[scale=.5]{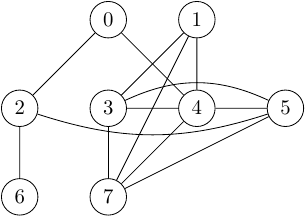}
    \subcaption{}
  \end{subfigure}%
  \begin{subfigure}[b]{.25\textwidth}
    \centering
    \includegraphics[scale=.5]{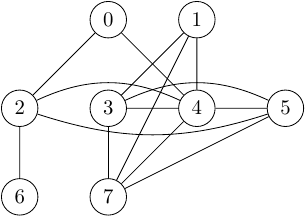}
    \subcaption{}
  \end{subfigure}%
  \begin{subfigure}[b]{.25\textwidth}
    \centering
    \includegraphics[scale=.5]{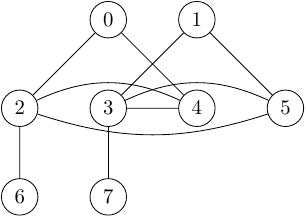}
    \subcaption{}
  \end{subfigure}%
  \begin{subfigure}[b]{.25\textwidth}
    \centering
    \includegraphics[scale=.5]{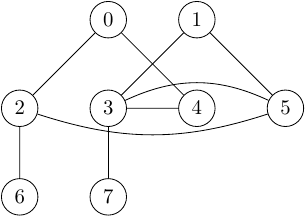}
    \subcaption{}
  \end{subfigure}%
  \caption{Functioning of the $\dupl$ gadget with input $\bfalse$.}
  \label{fig:duplication_f}
\end{figure}

\begin{figure}
  \centering
  \begin{subfigure}[b]{.33\textwidth}
    \centering
    \includegraphics[scale=.5]{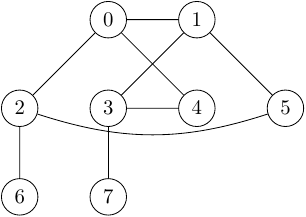}
    \subcaption{}
  \end{subfigure}%
  \begin{subfigure}[b]{.33\textwidth}
    \centering
    \includegraphics[scale=.5]{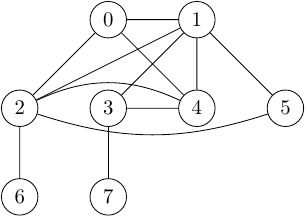}
    \subcaption{}
  \end{subfigure}%
  \begin{subfigure}[b]{.33\textwidth}
    \centering
    \includegraphics[scale=.5]{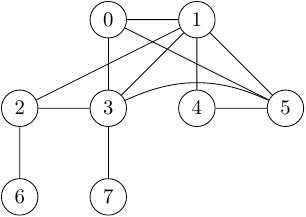}
    \subcaption{}
  \end{subfigure}\newline
  \begin{subfigure}[b]{.33\textwidth}
    \centering
    \includegraphics[scale=.5]{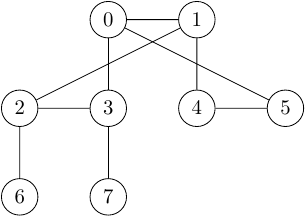}
    \subcaption{}
  \end{subfigure}%
  \begin{subfigure}[b]{.33\textwidth}
    \centering
    \includegraphics[scale=.5]{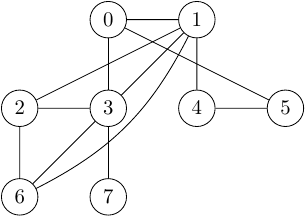}
    \subcaption{}
  \end{subfigure}%
  \begin{subfigure}[b]{.33\textwidth}
    \centering
    \includegraphics[scale=.5]{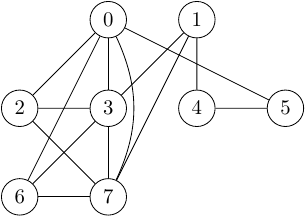}
    \subcaption{}
  \end{subfigure}\newline
  \begin{subfigure}[b]{.25\textwidth}
    \centering
    \includegraphics[scale=.5]{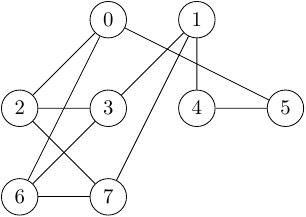}
    \subcaption{}
  \end{subfigure}%
  \begin{subfigure}[b]{.25\textwidth}
    \centering
    \includegraphics[scale=.5]{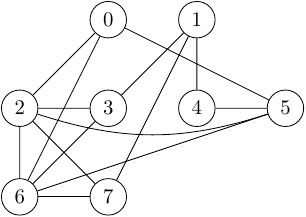}
    \subcaption{}
  \end{subfigure}%
  \begin{subfigure}[b]{.25\textwidth}
    \centering
    \includegraphics[scale=.5]{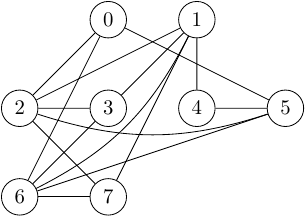}
    \subcaption{}
  \end{subfigure}%
  \begin{subfigure}[b]{.25\textwidth}
    \centering
    \includegraphics[scale=.5]{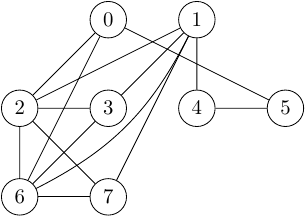}
    \subcaption{}
  \end{subfigure}%
  \caption{Functioning of the $\dupl$ gadget with input $\btrue$.}
  \label{fig:duplication_t}
\end{figure}

\subsection{Combining Gadgets}\label{subs:combining}

With the toolkit of gadgets now defined, the next step is to describe
how to combine them to simulate circuits.

A first step is describing how to combine two gadgets together.
In order to accomplish the latter, one must replace the output $(u,u')$
of a gadget with the input $(v,v')$ of a second gadget by collapsing
$u$ (resp. $u'$) with $v$ (resp. $v'$) into one vertex $p$ (resp
$p'$) while keeping all their edges, i.e., for every edge $(s,u)$
and $(t,v)$ (resp. $(s', u')$ and $(t',v')$) the edges $(s,p)$ and
$(t,p)$ (resp. $(s',p')$ and $(t',p')$) have to be created. Also,
the sequence $s_1$ of the first gadget has to be concatenated with
the sequence $s_2$ of the second gadget, resulting in $s_1s_2$.
An example is depicted in Figure~\ref{fig:combined_gadgets}.
Note that none of the gadgets consider applying local complementation
on their output vertices, therefore concatenating their sequences
does not leave to any kind of malfunctioning.

\begin{figure}
  \centering
  \includegraphics[scale = .5]{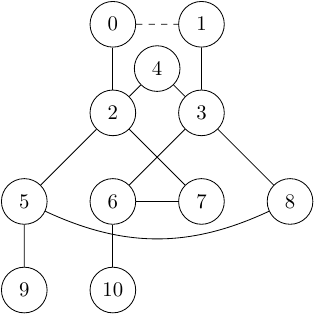}
  \caption{
    A $\bnot$ gadget and a $\dupl$ gadget combined. Local complementing
    on the sequence $(0,1,0,4,2,3,2,5,6,5,2,6,2)$ computes the duplication
    of the negation of the variable encoded in the adjacency relation
    of vertices $0$ and $1$.
  }
  \label{fig:combined_gadgets}
\end{figure}

\subsection{Building Circuits}

This subsection presents the construction of an arbitrary circuit by
combining the previously defined gadgets. Additionally, the correctness
of this construction is established. An example of this construction
is illustrated in Figure~\ref{fig:construction_ex}.

\begin{figure}
  \centering
  \includegraphics[scale = .8]{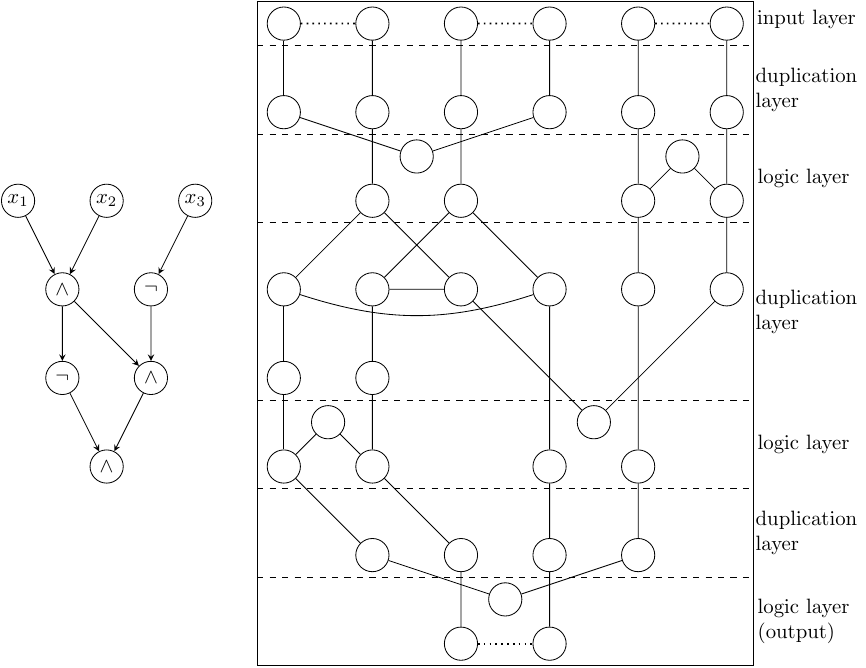}
  \caption{
    Left: example of a Boolean circuit with out-degree $2$. This circuit
    is composed only of $\band$ and $\bnot$ gates. Right: The construction
    of a graph that can simulate the circuit by applying local 
    complementation. The sequence corresponds to the concatenation of
    the sequences of the gadgets, layer by layer.
  }
  \label{fig:construction_ex}
\end{figure}

Given a circuit $C$ a graph-sequence structure (GSS) $(G,s,u,v)$ is constructed
where that $G$ is an undirected graph, $s$ a sequence of vertices of $G$
and $u,v \in V(G)$ represent the output of $C$. The construction is now presented.

The process starts with an empty graph-sequence structure.
First, for every input $x_i$ of a circuit $C$, a pair of
vertices $v_i, v'_i$ is created. The edge $(v_i, v'_i)$ is added if and
only if its associated Boolean value $b_i = \btrue$.
This structure is referred to as the \emph{input layer}.
After the input layer, duplication layers alternate
with logic layers, concluding with a final logic layer that contains
the circuit's output. Note that the sequence associated to the input
layer is the empty sequence, then nothing has to be added to $s$.

A \emph{duplication layer} is composed of $\dupl$ and $\copyg$ gadgets.
Depending on whether a Boolean value is used once or twice, one of these
two gadgets is selected and attached. Duplication layers are initially
placed after the input layer and subsequently after each logic layers.
Despite their placement, the construction process remains the same in
both cases. For every pair of vertices $v,v'$ representing a Boolean
variable in the previous layer, whether originating from the input
layer or from the output of a logic layer, a $\dupl$ or a $\copyg$ gadget
is attached. The sequences of all the gadgets used in a layer are
concatenated one after the other in no particular order and then
concatenated at the end of $s$.

On the other hand, a \emph{logic layer} is composed of $\band$ and
$\bnot$ gadgets. Similarly as before, the outputs of the previous
layer are the inputs of a logic layer as specified in the circuit $C$.
Again, the sequences of all the gadgets used in a layer are concatenated
one after the other in no particular order and then added to $s$. The
last layer of this construction is a logic layer that contains the
pair of vertices $u,v$ representing the output of $C$.

The correctness of this structure is now proven.

\begin{lemma}\label{lemma:graph_sequence_structure}
  Given a circuit $C$, input values $(b_1,\dots,b_n)$ and the GSS of
  $C$ as $(G,s,u,v)$, then $(u,v) \in G*s$ if and only if
  $C$ outputs $\btrue$ on inputs $x$.
\end{lemma}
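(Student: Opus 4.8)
The plan is to prove Lemma~\ref{lemma:graph_sequence_structure} by a layer-by-layer induction that tracks a strong invariant about the state of the graph after the sequence of each layer has been applied. I would first make precise the notion of a Boolean value being ``carried'' on a pair of vertices $(v,v')$: the value is $\btrue$ iff the edge $(v,v')$ is present. The invariant to maintain is the following: after applying the concatenated sequence of the first $\ell$ layers, (i) every output pair $(v,v')$ of layer $\ell$ carries exactly the Boolean value that the corresponding wire of $C$ has under inputs $(b_1,\dots,b_n)$; (ii) there are no edges between vertices belonging to the output pairs of different gadgets of layer $\ell$ (the ``no cross-edges'' property, which is exactly why the last $0$ in $s_{\smalldupl}$ was included); and (iii) the vertices of layer $\ell$ are otherwise disconnected from everything in earlier layers except through the shared (collapsed) interface vertices. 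The base case is the input layer, where the invariant holds by construction since no local complementation is applied.

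The inductive step splits into two cases according to the type of the next layer. For a duplication layer, I would invoke the per-gadget correctness established (via the figures) for $\copyg$ and $\dupl$: each gadget, started with its input pair carrying value $b$ and otherwise in its initial local state, after applying its own constant sequence produces its output pair(s) carrying $b$, with no edge between the two output pairs and no residual edges to the input vertices that would interfere. The key point needing care is that the gadgets in a layer share their input interface vertices with output vertices of the previous layer, and that the sequences of the different gadgets in the layer are concatenated in arbitrary order; I must argue that local complementations belonging to one gadget do not disturb another. This follows from invariant (ii)/(iii) of the previous layer — distinct gadgets of layer $\ell+1$ only overlap in vertices of layer $\ell$, and by induction those shared vertices have no edges crossing between distinct previous-layer output pairs, so the neighborhoods touched by one gadget's local complementations are disjoint from those of another except through vertices that are being read, not modified. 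One also checks that no gadget's sequence ever applies a local complementation at a shared interface vertex in a way that alters a sibling gadget — here the observation already noted in the combining subsection, that no gadget complements on its output vertices, does the work. The logic layer case is identical in structure, using the verified correctness of the $\band$ and $\bnot$ gadgets for all $\le 4$ input combinations.

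Finally, applying the invariant to the last layer (a logic layer) gives that the pair $(u,v)$ carries the output value of $C$, i.e. $(u,v)\in G*s$ iff $C$ outputs $\btrue$ on $x$, which is the claim.

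I expect the main obstacle to be the bookkeeping in the inductive step showing non-interference between gadgets that share interface vertices and whose sequences are interleaved by concatenation: one must pin down precisely which edges each gadget's sequence can create or destroy and verify that, restricted to the shared layer-$\ell$ vertices, every gadget sees exactly the ``clean'' initial configuration it was verified against. The no-cross-edges property of $\dupl$ (and trivially of $\copyg$) is the linchpin that makes this go through, and I would state it as an explicit auxiliary claim — proven from the figures — before running the main induction, so that the induction hypothesis can carry it forward cleanly to every subsequent layer.
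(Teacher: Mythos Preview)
Your approach is the same layer-by-layer induction the paper uses, and invariants (i) and (ii) are exactly the right things to track (the paper is less explicit about (ii), merely remarking on the extra $0$ in $s_{\smalldupl}$). However, invariant (iii) as you state it is false and would make the inductive step fail. After a gadget's sequence has been applied, its output vertices \emph{do} acquire direct edges to vertices of strictly earlier layers, not merely ``through'' the interface. For instance, run the $\copyg$ gadget on a $\btrue$ input whose vertex $0$ already carries an extra edge to some stale vertex $z$: after applying $s_{\smallcopy}=(0,1,0)$ the output vertex $3$ is adjacent to $z$. So you cannot maintain that layer-$\ell$ output vertices are disconnected from earlier layers, and if you try to establish (iii) in the inductive step you will get stuck.

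The fix --- which is what the paper actually argues, in the sentence ``irrelevant because those vertices will not undergo further local complementation'' --- is to drop (iii) and instead observe that these backward edges are harmless: none of the earlier-layer vertices ever reappear in $s$, and local complementation at $v$ only toggles edges among neighbours of $v$. The clean auxiliary lemma you want is: if $G'$ is obtained from $G$ by adjoining a set $Z$ of fresh vertices with arbitrary edges incident to $Z$, then for every sequence $s$ over $V(G)$ and every $u,w\in V(G)$ one has $(u,w)\in E(G*s)$ iff $(u,w)\in E(G'*s)$. With this in hand, the inductive step reduces to checking that the subgraph induced on each layer-$(\ell{+}1)$ gadget's own vertex set is exactly its verified initial configuration --- which follows from (i), (ii), and the fact that distinct gadgets in a layer are vertex-disjoint (they do not ``overlap in vertices of layer $\ell$'' as you wrote; their input pairs are distinct outputs of layer $\ell$). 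Your final paragraph already gestures at this (``restricted to \ldots\ every gadget sees exactly the clean initial configuration''), so the repair is small: replace (iii) by the extension lemma and the proof goes through just as in the paper.
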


\begin{proof}

  First of all, notice that the sequences associated to the gadgets
  exclude the vertices defined as gadget outputs. This ensures that
  during the simulation, applying local complementation on the
  vertices specified by $s$ within a particular layer never results
  in edges connecting to vertices in the subsequent layers.

  The proof proceeds by induction on the layers of the GSS representing
  the circuit. The base case establishes that the input layer combined
  with the first duplication and logic layers, correctly simulates the
  first layer of gates of $C$.

  For proving the base case, local complementation is applied
  to the vertices given by $s$ corresponding to the first three layers
  of the GSS. Since the input layer works simply as the representation
  of the inputs $x$, the process begins by local complementing the vertices
  specified by $s$ on the first duplication layer. This step preserves the
  behavior of the $\dupl$ and $\copyg$ gadgets in their original form,
  ensuring they work as intended.

  Attention then shifts to the first logic layer. Although the gadgets
  in this layer are no longer in their original form, due to the input
  vertices having additional edges, applying local complementation on the
  vertices specified by $s$ results in irrelevant edges with the
  preceding layers and in the desired edges from the original functioning
  of the gadgets. The edges connecting previous layers are irrelevant
  because those vertices will not undergo further local complementation.
  This behavior is present across all the subsequent layers.

  For the induction step, the assumption is made that the simulation works
  as expected up to the the $k$-th logic layer. The goal is to prove that
  the subsequent duplication and logic layers also behave as intended.

  First, consider the $k+1$-th duplication layer. By the induction
  hypothesis, the $k$-th logic layer correctly represent the outputs of
  the corresponding gates of $C$. The gadget inputs at duplication layer
  $k+1$ are these outputs. Similarly as before, applying
  local complementation as specified by $s$ transforms the $\copyg$
  and $\dupl$ gadgets as intended but with extra edges connecting
  the previous layers of the GSS. Therefore, the $k+1$-th duplication
  layer functions as intended.

  Finally, consider the $k+1$-th logic layer. Gadget inputs in this
  layer are the outputs of the $k+1$-th duplication layer. Therefore,
  applying local complementation on the vertices specified by $s$
  work as intended. Thus, the correctness of the construction extends
  to logic gate $k+1$, completing the induction step.
\end{proof}

\subsection{Computing the GSS of a Circuit in $\LS$}

Lemma~\ref{lemma:graph_sequence_structure} proves that simulating a
circuit using its GSS correctly computes its output. If the GSS of a
circuit can be computed in $\LS$, it implies that $\CVP$ reduces to
$\LCP$ under $\LS$ reductions. Consequently, proving this fact
demonstrates the main result of this article, showing that $\LCP$ is
$\Poly$-complete.

The GSS of a circuit can be indeed computed in $\LS$ due to its
structured and modular nature (the graph and the sequence are
composed of fixed-size objects). The graph is constructed by
``gluing'' predefined, fixed-size graphs corresponding to the
components of the circuits. This operation only requires keeping
track of the components and their connections, which can be achieved
in logarithmic space.

Similarly, the sequence associated with the GSS is formed by
concatenating the fixed sequences of the gadgets. Since this
concatenation is performed directly based on the circuit's structure
and it does not depend on global information, it can also be
managed in $\LS$. Finally, the two special vertices of the GSS
representing the output of the circuit are identified during the
construction with minimal usage of space. The following theorem is
then concluded.

\begin{theorem}
  $\LCP$ is $\Poly$-complete.
\end{theorem}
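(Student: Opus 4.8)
The plan is to establish the two halves of $\Poly$-completeness separately, drawing on the machinery already developed. Membership in $\Poly$ is immediate and is essentially observed in the statement of the problem: given an instance $(G,(v_1,\dots,v_k),u,v)$, one maintains the adjacency matrix of the current graph and applies the $k$ local complementations one at a time — each update touching $\O(n^2)$ matrix entries according to the two-case adjacency rule for $G*v$ — and finally reads off the entry indexed by $(u,v)$, for a total running time of $\O(k n^2)$.

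For $\Poly$-hardness, I would reduce from the restricted version of $\CVP$ fixed earlier (topologically ordered, gates only in $\{\band,\bnot\}$, fan-out at most $2$), which is $\Poly$-complete~\cite{greenlaw95}. On input a circuit $C$, values $(b_1,\dots,b_n)$, and designated output $y$, the reduction outputs the graph--sequence structure $(G,s,u,v)$ built by the construction of the previous subsections: an input layer encoding each $b_i$ by the presence or absence of the edge $(v_i,v'_i)$; then alternating duplication layers (a $\copyg$ or a $\dupl$ gadget per value, depending on whether it feeds one or two gates) and logic layers ($\band$ and $\bnot$ gadgets mirroring the gates of $C$, wired as in $C$); ending with a logic layer whose distinguished output vertices we call $u$ and $v$. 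The produced $\LCP$ instance is $(G,s,u,v)$. Correctness is exactly Lemma~\ref{lemma:graph_sequence_structure}: $(u,v)\in G*s$ iff $C$ outputs $\btrue$ on $(b_1,\dots,b_n)$, i.e.\ iff the $\CVP$ instance is a yes-instance.

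It then remains to check that this map is computable in $\LS$, which certainly suffices for a $\Poly$-hardness reduction ($\Poly$ is closed under $\LS$, indeed under polynomial-time, reductions). As discussed in the preceding subsection, $G$ is the gluing of a constant number of constant-size gadget graphs, one per wire and per gate of $C$; a logspace transducer can scan the encoding $\overline{C}$, enumerate gates in topological order, emit the corresponding gadget vertices and edges with suitable index offsets, record the vertex identifications at layer boundaries, and append the fixed gadget sequences to $s$ in the order visited — all using $\O(\log|\overline{C}|)$ bits of work tape. The two output vertices $u,v$ are recognized on the fly while processing the last logic layer. Hence $\CVP$ reduces to $\LCP$ under $\LS$ reductions, and together with membership in $\Poly$ this gives $\Poly$-completeness.

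I expect the genuinely delicate part to be not this final bookkeeping but the correctness of the gadgets and of their composition — which is precisely what Lemma~\ref{lemma:graph_sequence_structure} (and the gadget definitions) must secure. Two points carry the argument: each gadget must compute its Boolean function for \emph{every} input using one \emph{fixed} sequence that never operates on the gadget's own output vertices; and when gadgets are glued, the local complementations prescribed inside a layer must create, toward later layers, only edges incident to vertices that are never complemented again (hence harmless), never edges that corrupt a downstream gadget. The extra trailing $0$ in $s_{\smalldupl}$, which guarantees that the two outputs of a $\dupl$ gadget are non-adjacent, is exactly the kind of design detail needed to make the layer-by-layer induction in Lemma~\ref{lemma:graph_sequence_structure} go through; granting that lemma, the theorem follows.
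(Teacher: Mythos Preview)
Your proposal is correct and follows essentially the same route as the paper: membership in $\Poly$ via the naive iterative simulation, and $\Poly$-hardness via the $\LS$ reduction from the restricted $\CVP$ using the graph--sequence structure, with correctness delegated to Lemma~\ref{lemma:graph_sequence_structure} and logspace computability argued from the modular, fixed-size nature of the gadgets. Your closing remarks on the role of the trailing $0$ in $s_{\smalldupl}$ and on why gadget sequences avoid output vertices match the paper's own emphasis.
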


\subsection{Further Observations}\label{sub:further_observations}

An interesting variation of the $\LCP$ extends the allowed operations
to both local complementation and vertex deletions, making it the
``vertex-minor counterpart'' of the $\LCP$. In this case, given
a graph $G$, a sequence $s$ of pairs $(v,operation)$, where each pair
consists of a vertex $v$ and an operation (either local complementation
or vertex deletion), and a pair of special vertices $u,u'$, the problem
asks whether $u$ and $u'$ are connected in the graph obtained after
applying the sequence of operations from $s$. This variant remains
$\Poly$-complete, which directly follows from the $\Poly$-completeness
of the $\LCP$.

\textit{Subgraph complementation} is a local transformation also
studied in the literature (see~\cite{fomin20} for example).
A subgraph complement of a graph $G$ is
a graph obtained by complementing one of its induced subgraphs.
Thus, local complementation can be seen as a special case of subgraph
complementation. A variation of the $\LCP$ in which the sequence
of vertices is replaced by a sequence of vertex subsets to apply subgraph
complementation, would also be $\Poly$-complete.

Another noteworthy observation comes from restricting the $\LCP$ to complete
graphs and star graphs. The set of graphs
locally equivalent to a complete graph is quite restricted: these are exactly
the star graphs of the same size (i.e., with the same number of vertices), and
vice versa (see Figure~\ref{fig:complete_and_star}). As a result, to decide the
problem, one only needs to track whether the graph is a complete graph or a
star graph, and if it is a star graph, which vertex serves as the center of the
star. Since this information can be stored in a single integer, the problem can
be categorized in $\LS$. This shows that, for these specific graph classes,
the complexity of the problem is significantly reduced.

\begin{figure}
  \centering
  \includegraphics[scale=1]{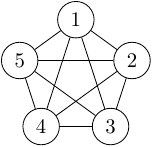}
  \quad
  \includegraphics[scale=1]{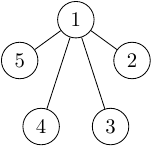}
  \caption{
    The complete graph $K_5$ and $K_5*1$, the star graph centered on
    vertex $1$.}
  \label{fig:complete_and_star}
\end{figure}

\section{Conclusions}

In this work the $\LCP$ is introduced and proved to be $\Poly$-complete.
This problem captures the complexity of the operation of local complementation
by asking whether the graph resulting from applying this operation following a
given sequence contains or not a specific edge. This gives valuable insights,
as calculating a single edge through a series of \emph{given} local complementations
is unlikely to be done with more efficiency unless $\Poly$ equals a lower
complexity class such as $\NC$ or $\LS$.

In contrast, for the closely related $\LEP$, it was established that the problem
can be solved in $\NC^2$, showing that determining whether two graphs are locally
equivalent admits an efficient parallel algorithm. Together, these results
delineate the frontier between problems on local complementation that are
parallelizable and those that remain inherently sequential.

In subsection~\ref{sub:further_observations}, it was observed that variations of the 
$\LCP$ present interesting challenges, with their computational complexity varying
significantly. These variations might involve considering different operations,
such as pivoting or switching,
instead of local complementation, or combinations of operations as mentioned earlier
(e.g., local complementation combined with vertex deletions). The complexity of
such variants can be influenced by restricting the number of times an operation
can be applied to a vertex or by limiting the graph class under consideration,
as seen in the case of complete graphs and star graphs.

A particularly interesting class of graphs are circle graphs, which serve as a
canonical example of a class closed under local complementation. It is worth
noting that all the gadgets presented in this paper are circle graphs, which is
not difficult to verify (a task left to the reader). I conjecture that the problem
remains $\Poly$-complete when restricted to circle graphs, though this remains an
open question.

The results of this paper also have implications in the context of graph states in
quantum computing. As local complementation plays a crucial role in the manipulation
of graph states, the $\Poly$-completeness of the $\LCP$ together with the parallel
tractability of $\LEP$ indicate that, while deciding Clifford equivalence is efficiently
parallelizable, predicting the effect of a prescribed sequence of operations on a
specific edge is computationally harder.

\section*{Acknowledgements}
I thank Kévin Perrot for helpful observations and discussions.


\bibliography{refs}

\end{document}